\newtheorem{theorem}{Theorem}
\newcommand{\RNum}[1]{\uppercase\expandafter{\romannumeral #1\relax}}
\newtheorem{lemma}{Lemma}
\newtheorem{proposition}{Proposition}
\definecolor{lime}{HTML}{A6CE39}
\titlespacing{\section}{0pt}{1.2ex plus .0ex minus .0ex}{.3ex plus .0ex}
\titlespacing{\subsection}{0pt}{1.2ex plus .0ex minus .0ex}{.3ex plus .0ex}
\DeclareRobustCommand{\orcidicon}{%
	\begin{tikzpicture}
	\draw[lime, fill=lime] (0,0) 
	circle [radius=0.16] 
	node[white] {{\fontfamily{qag}\selectfont \tiny ID}};    \draw[white, fill=white] (-0.0625,0.095) 
	circle [radius=0.007];    \end{tikzpicture}
	\hspace{-2mm}}
\xdef\csname orcid\x\endcsname{\noexpand\href{https://orcid.org/\csname orcidauthor\x\endcsname}{\noexpand\orcidicon}}
\newcommand*\bigcdot{\mathpalette\bigcdot@{.5}}
\newcommand*\bigcdot@[2]{\mathbin{\vcenter{\hbox{\scalebox{#2}{$\m@th#1\bullet$}}}}}
\begin{document}
\title{Goal-oriented Tensor: Beyond AoI Towards Semantics-Empowered Goal-Oriented Communications}
\author{Aimin Li, 
	 Shaohua Wu, 
     and Sumei Sun \\
%
%
%
}

\maketitle
\allowdisplaybreaks
\begin{abstract}
The intricate interplay of source dynamics, unreliable channels, and \emph{staleness} of information has long been recognized as a significant impediment for the receiver to achieve accurate, timely, and most importantly, goal-oriented decision making. Thus, a plethora of promising metrics, such as \emph{Age of Information}, \emph{Value of Information}, and \emph{Mean Square Error}, have emerged to quantify these underlying adverse factors. Following this avenue, optimizing these metrics has indirectly improved the \emph{utility} of goal-oriented decision making. Nevertheless, no metric has hitherto been expressly devised to directly evaluate the \emph{utility} of a goal-oriented decision-making process. To this end, this paper investigates a novel performance metric, the Goal-oriented Tensor (GoT), to directly quantify the impact of semantic mismatches on goal-oriented decision making. Based on the GoT, we consider a \emph{sampler-decision maker} pair that work collaboratively and distributively to achieve a shared goal of communications. We formulate an infinite-horizon Decentralized Partially Observable Markov Decision Process (Dec-POMDP) to conjointly deduce the optimal deterministic sampling policy and decision-making policy. The simulation results reveal that the \emph{sampler-decision maker} co-design surpasses the current literature on AoI and its variants in terms of both goal achievement \emph{utility} and sparse sampling rate, signifying a notable accomplishment for a sparse sampler and goal-oriented decision maker co-design.
\end{abstract}
\begin{IEEEkeywords}
 Goal-oriented communications, Goal-oriented Tensor, Status updates, Age of Information, Age of Incorrect Information, Value of Information.
\end{IEEEkeywords}

\IEEEpeerreviewmaketitle

\section{Introduction}\label{sectionI}
 {Age of Information} (AoI), a metric proposed in \cite{kaul2011minimizing}, has emerged as an important metric to capture the data \emph{freshness} perceived by the receiver. Since its inception, AoI has garnered significant research attention and has been extensively analyzed and optimized to improve the performance of queuing systems, physical-layer communications, MAC-layer communications, Internet of Things, etc. \cite{yatesAgeInformationIntroduction2021}. These research efforts are driven by the consensus that a freshly received message typically contains critical and valuable information, thereby improving the precision and timeliness of decision-making processes.

 Though AoI has been proven efficient in many freshness-critical applications, it exhibits several critical shortcomings. Specifically, ($a$) AoI fails to provide a direct measure of information value; 
 ($b$) AoI does not consider the content dynamics of source data; ($c$) AoI ignores the effect of End-to-End (E2E) information mismatch on the decision-making process. To address these limitations, numerous AoI variants have been extensively investigated. One typical approach in this research avenue is to impose a non-linear penalty on AoI \cite{kosta2017age,kosta2020cost,sun2019sampling} to evaluate the E2E ``\emph{dissatisfaction}'' degree resulted by \emph{stale} information. This non-linear penalty is called \emph{Value of Information} (VoI), which assists mitigate the shortcoming ($a$) discussed above. Other research attempt to address the shortcoming ($b$) \cite{AoCI,zhong2018two}. In \cite{AoCI}, \emph{Age of Changed Information} (AoCI) is proposed to address the ignorance of content dynamics of AoI. In this regard, unchanged statuses do not necessarily provide new information and thus are not prioritized for transmission. In \cite{zhong2018two}, the authors propose a novel age penalty named \emph{Age of Synchronization} (AoS), which represents the duration that has elapsed subsequent to the most recent synchronization. Mean Square Error (MSE) and its variants are introduced to address the shortcoming ($c$) \cite{zheng2020urgency, AoII}. In \cite{zheng2020urgency}, the authors introduce the context-aware weighting coefficient to propose the \emph{Urgency of Information} (UoI), which can measure the weighted MSE under contexts with varying levels of urgency. Moreover, considering that an E2E mismatch may exert a detrimental effect on the overall system's performance over time, the authors of \cite{AoII} propose a novel metric called \emph{Age of Incorrect Information} (AoII) to quantify the negative impact resulting from the duration of the E2E mismatch. The AoII metric reveals that both the degree and the duration of E2E semantics mismatch result in the \emph{utility} reduction to the subsequent decision making.
 
 Notwithstanding the above advancements, the question on \emph{how the E2E mismatch affects the utility of decision making has yet to be addressed}. \cite{Kountouris2020SemanticsEmpoweredCF,9551200,Salimnejad2023RealtimeRO,fountoulakisGoalorientedPoliciesCost2023} introduce a metric termed \emph{Cost of Error Actuation} to delve deeper into the cost resulting from the error actuation due to imprecise real-time estimations. Specifically, the \emph{Cost of Error Actuation} is denoted by an asymmetric zero diagonal matrix $\mathbf{C}$, with each value $C_{X_t,\hat{X}_t}$ representing the instant cost under the E2E mismatch status $(X_t,\hat{X}_t)_{X_t\ne\hat{X}_t}$. In this regard, the authors unveil that the \emph{utility} of decision making bears a close relation to the E2E semantic mismatch category, as opposed to the mismatch duration (AoII) or mismatch duration (MSE). For example, an E2E semantic mismatch category that a fire is estimated as no fire will result in higher cost; while the opposite scenario will result in lower cost.

Nonetheless, we notice that $i$) the method to obtain a \emph{Cost of Error Actuation} remains unclear, which implicitly necessitates a pre-established decision-making policy; $ii$) \emph{Cost of Error Actuation} does not consider the context-varying factors, which may also affect the decision-making \emph{utility}; $iii$) the zero diagonal property of the matrix implies the supposition that if $X_t=\hat{X}_t$, then $C_{X_t,\hat{X}_t}=0$, thereby signifying that errorless actuation necessitates no energy expenditure. To address these issues, the present authors have recently proposed a new metric referred to as GoT in \cite{li2023goaloriented}, which, compared to \emph{Cost of Error Actuation}, introduces new dimensions of the context $\Phi_{t}$ and the decision-making policy $\pi_A$ to describe the true \emph{utility} of decision making. In this paper, we further technically exploit the potential of GoT, the primary ingredients are as follows:

\noindent $\bullet$ We focus on the decision \emph{utility} issue by employing the GoT. A controlled Markov source is observed, wherein the transition of the source is dependent on both the decision making at the receiver and the contextual situation it is situated. In this case, the decision making will lead to three aspects in \emph{utility}: $i$) the future evolution of the source; $ii$) the instant cost at the source; $iii$) the energy and resources consumed by actuation. 

\noindent $\bullet$ We accomplish the goal-oriented \emph{sampler-decision maker} co-design, which, to the best of our knowledge, represents the first work that addresses the trade-off between sampling and decision making. Specifically, we formulate this problem as a two-agent infinite-horizon Dec-POMDP problem, with one agent embodying the sampler and the other representing the decision maker. Note that the optimal solution of even a finite-Horizon Dec-POMDP is known to be NEXP-complete \cite{bernstein2002complexity}, we design a RVI-Brute-Force-Search Algorithm to acquire the optimal joint sampling and decision making policies.

\section{System Model}\label{section III}

We consider a time-slotted perception-actuation loop where both the perceived semantics $X_t\in\mathcal{S}=\left\{s_1,\cdots,s_{|\mathcal{S}|}\right\}$ and context $\Phi_t\in\mathcal{V}=\left\{v_1,\cdots,v_{|\mathcal{V}|}\right\}$ are input into a semantic sampler, tasked with determining the significance of the present status $X_t$ and subsequently deciding if it warrants transmission via an unreliable channel. The semantics and context are extracted and assumed to perfectly describe the status of the observed process. The binary indicator, $a_S(t)=\pi_S(X_t,\Phi_{t},\hat{X}_t) \in \left\{0, 1\right\}$, signifies the sampling and transmission action at time slot $t$, with the value $1$ representing the execution of sampling and transmission, and the value $0$ indicating the idleness of the sampler. $\pi_S$ here represents the sampling policy. We consider a perfect and delay-free feedback channel \cite{Kountouris2020SemanticsEmpoweredCF,9551200,Salimnejad2023RealtimeRO,fountoulakisGoalorientedPoliciesCost2023}, with ACK representing a successful transmission and NACK representing the otherwise. The decision maker at the receiver will make decisions $a_A(t)\in\mathcal{A}_A=\left\{a_1,\cdots,a_{|\mathcal{A}_A|}\right\}$ base on the estimate $\hat{X}_t$, which will ultimately affect the \emph{utility} of the system. An illustration of our considered model is shown in Fig. \ref{systemmodel}. 

\begin{figure}[htbp]
	\centering
	\includegraphics[angle=0,width=0.45\textwidth]{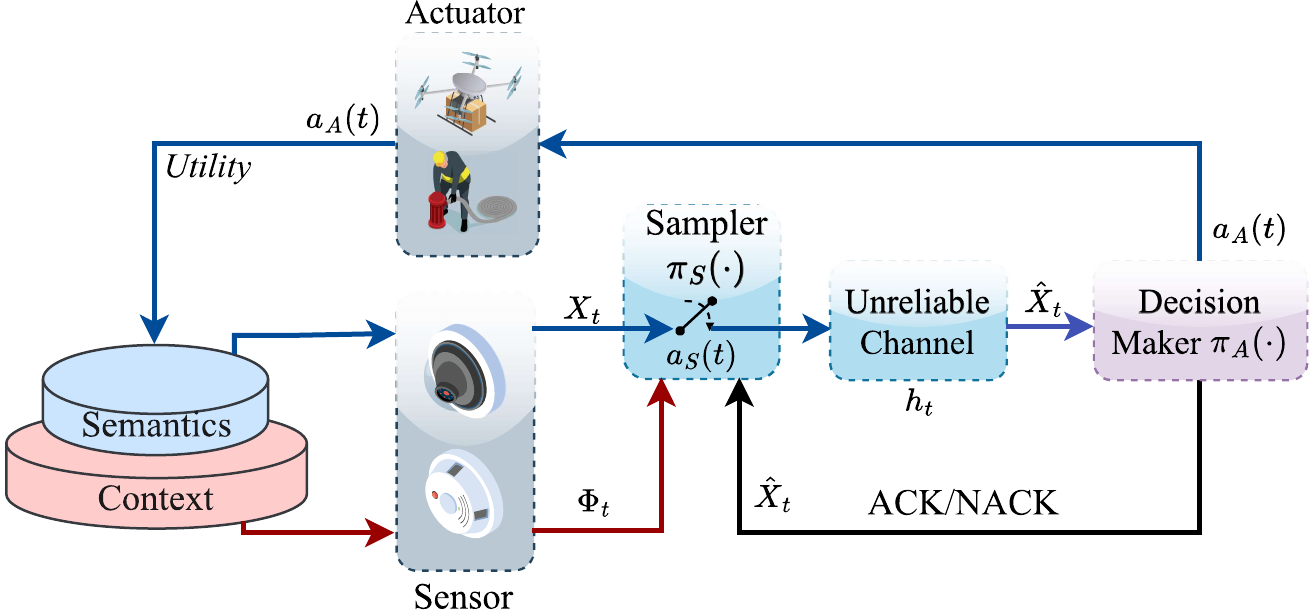}
	\caption{Illustration of our considered system where transmitted semantic status arrives at a receiver for decision making to achieve a certain goal.}\label{systemmodel}
\end{figure}

\subsection{Semantics and Context Dynamics}

We consider a controlled Discrete Markov source:
\begin{equation}\label{Source}
\Pr \left( {X_{t+1}=s_u\left| {X_t=s_i, a_A(t)=a_m, \Phi_t=v_k} \right.} \right)=p_{i,u}^{(k,m)}.
\end{equation}
Here the dynamics of the source is dependent on both the decision making $a_A(t)$ and context $\Phi_{t}$. Furthermore, we take into account the variations in context $\Phi_t$, characterized by the transition probability:
\begin{equation}\label{context}
\Pr \left( {\Phi_t=v_r\left| {\Phi_{t+1}=v_k} \right.} \right)=p_{k,r}.
\end{equation}
In general, the dynamics of semantics and context are independent of each other.

\subsection{Unreliable Channel and Estimate Transition}
We assume that the channel realizations exhibit independence and identical distribution (i.i.d.) across time slots, following a Bernoulli distribution. Particularly, the channel realization $h_t$ assumes a value of $1$ in the event of successful transmission, and $0$ otherwise. Accordingly, we define the probability of successful transmission as $\Pr\left(h_t=1\right)=p_S$ and the failure probability as $\Pr\left(h_t=0\right)=1-p_S$. To characterize the dynamic process of $\hat{X}_t$, we consider two cases as described below:

\noindent $\bullet$ $a_S(t)=0$. In this case, the sampler and transmitter remain idle, manifesting that there is no new knowledge given to the receiver, \emph{i.e.}, $\hat{X}_{t+1}=\hat{X}_{t}$. As such, we have:
\begin{equation}\label{as0}
\Pr\left({\hat{X}_{t+1}=x\left|\hat{X}_t=s_j,a_S(t)=0\right.}\right)=\mathbbm{1}_{\left\{x=s_j\right\}}.
\end{equation}

\noindent $\bullet$ $a_S(t)=1$. In this case, the sampler and transmitter transmit the current semantic status $X_t$ through an unreliable channel. As the channel is unreliable, we differentiate between two distinct situations: $h_t=1$ and $h_t=0$:

\noindent (a) $h_t=1$. In this case, the transmission is successful. As such, the estimate at the receiver $\hat{X}_{t+1}$ is nothing but $X(t)$, and the transition probability is
\begin{equation}
\begin{aligned}
&\Pr\left({\hat{X}_{t+1}=x\left|\hat{X}_t=s_j,X_t=s_i,a_S(t)=1,h_t=1\right.}\right)\\
&=\mathbbm{1}_{\left\{x=s_i\right\}}.
\end{aligned}
\end{equation}

\noindent (b) $h_t=0$. In this case, the transmission is not successfully decoded by the receiver. As such, the estimate at the receiver $\hat{X}_{t+1}$ remains $\hat{X}(t)$. In this way, the transition probability is
\begin{equation}
\small
\begin{aligned}
&\Pr\left({\hat{X}_{t+1}=x\left|\hat{X}_t=s_j,X_t=s_i,a_S(t)=1,h_t=0\right.}\right)=\mathbbm{1}_{\left\{x=s_j\right\}}.
\end{aligned}
\end{equation}
As the channel realization $h_t$ is independent with the process of $X_t$, $\hat{X}_t$, and $a_S(t)$, we have that
\begin{equation}\label{as1}
\small
\begin{aligned}
&\Pr\left({\hat{X}_{t+1}=x\left|\hat{X}_t=s_j,X_t=s_i,a_S(t)=1\right.}\right)\\
&=\sum_{h_t}p\left(h_t\right)\Pr\left({\hat{X}_{t+1}=x\left|\hat{X}_t=s_j,X_t=s_i,a_S(t)=1,h_t\right.}\right)\\
&=p_S\cdot\mathbbm{1}_{\left\{x=s_i\right\}}+(1-p_S)\cdot\mathbbm{1}_{\left\{x=s_j\right\}}.
\end{aligned}
\end{equation}
Combing (\ref{as0}) with (\ref{as1}) yields the dynamics of the estimate.
\subsection{Goal-oriented Decision Making and Actuating}
We note that the previous works primarily focus on minimizing the open-loop freshness-related or error-related penalty for a transmitter-receiver system. Nevertheless, irrespective of the \emph{fresh} delivery or accurate end-to-end timely reconstruction, the ultimate goal of such optimization efforts is to ensure precise and effective decision-making. To this end, we broaden the open-loop transmitter-receiver information flow to include a perception-actuation closed-loop \emph{utility} flow by incorporating the decision-making and actuation processes. As a result, decision-making and actuation enable the conversion of status updates into ultimate effectiveness. Here the decision making at time slot $t$ follows that $a_A(t)=\pi_A(\hat{X}_t)$, with $\pi_A$ representing the deterministic decision-making policy.

\subsection{Metric: Goal Characterization Through GoT}\label{section II}
A three-dimension GoT could be defined by a mapping \footnote{It is important to note that the GoT could be expanded into higher dimensions by integrating additional components, including actuation policies, task-specific attributes, and other pertinent factors.}:
\begin{equation}
(X_t,\Phi_t,\hat{X}_t)\in \mathcal{S}\times \mathcal{V}\times\mathcal{S}\overset{\mathcal{L}}{\rightarrow} \mathrm{GoT}(t)\in\mathbb{R}.
\end{equation}
In this regard, the GoT, denoted by $\mathcal{L}(X_t,\Phi_t,\hat{X}_t)$ or $\mathrm{GoT}(t)$, indicates the instant cost of the system at time slot $t$, with the knowledge of $(X_t,\Phi_t,\hat{X}_t)$. From \cite{li2023goaloriented}, we have shown that a GoT, given a specific triple-tuple $(X_t, \hat{X}_t,\Phi_t)$ and a decision-making policy $\pi_A$, could be calculated by 
\begin{equation}\label{got}
\begin{aligned}
&\mathrm{GoT}^{\pi_A}(t)=\mathcal{L}(X_t,\Phi_t,\hat{X}_t,\pi_A)\\
&=\left[C_1(X_t,\Phi_t) - C_2(\pi_A(\hat{X}_t))\right]^+
+ C_3(\pi_A(\hat{X}_t)),
\end{aligned}
\end{equation}
where the status inherent cost $C_1(X_t,\Phi_t)$ quantifies the inherent cost under different semantics-context pairs $(X_t, \Phi_t)$ in the absence of external influences; the actuation gain cost $C_2(\pi_A(\hat{X}_t))$ quantifies the prospective {reduction in severity} resulting from the actuation $\pi_A(\hat{X}(t))$; the actuation inherent cost $C_3(\pi_A(\hat{X}_t))$ reflects the resources consumed by a particular actuation $\pi_A(\hat{X}(t))$. The ramp function $\left[\cdot\right]^+$ ensures that any actuation $\pi_A(\hat{X}_t)$ reduces the cost to a maximum of 0. A visualization of a specific GoT construction is shown in Fig. \ref{gotconstructing}. The GoT in Fig. 2 is obtained through the following definition:
\begin{equation}\label{gotexample}
\begin{array}{ll}
C_1(X_t,\Phi_t)= \left(\begin{array}{c|ccc}
~ & 0 & 1 & 2 \\
\hline
0 & 0 & 1 & 3 \\
1 & 0 & 2 & 5
\end{array}\right),& \pi_A(\hat{X}_t)=\left[0,1,2\right], \\
C_2(\pi_A(\hat{X}_t))=2\pi_A(\hat{X}_t), &C_3(\pi_A(\hat{X}_t)) =\pi_A(\hat{X}_t).
\end{array}
\end{equation}

\begin{figure}[htbp]
	\centering
	\includegraphics[angle=0,width=0.45\textwidth]{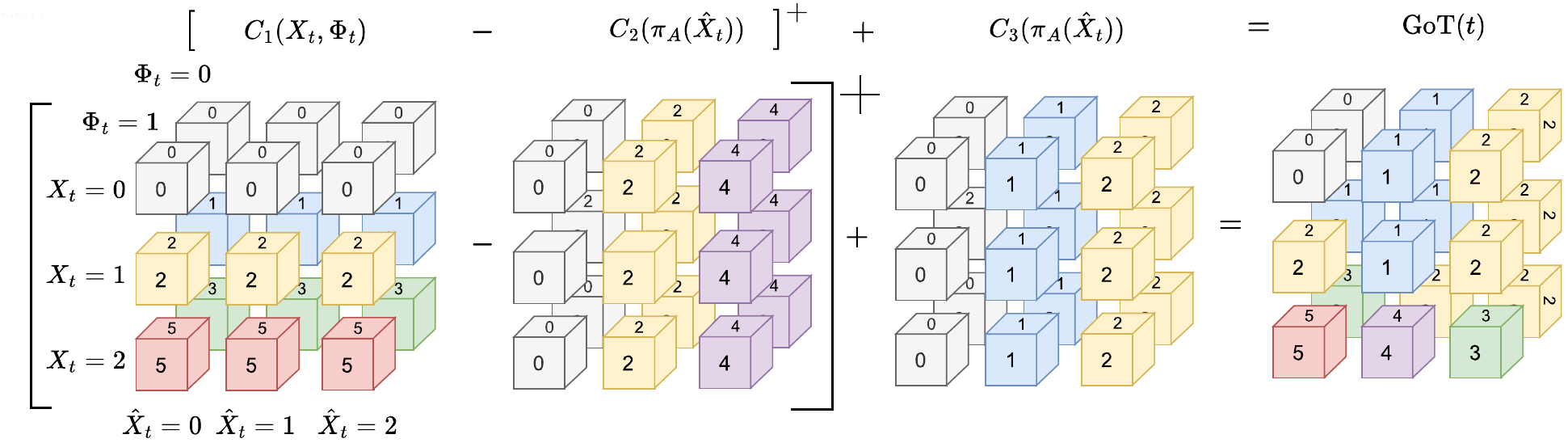}
	\caption{A visualized example for characterizing the GoT through (\ref{got}) and (\ref{gotexample}).}\label{gotconstructing}
\end{figure}
\begin{figure}[htbp]
	\centering
	\includegraphics[angle=0,width=0.45\textwidth]{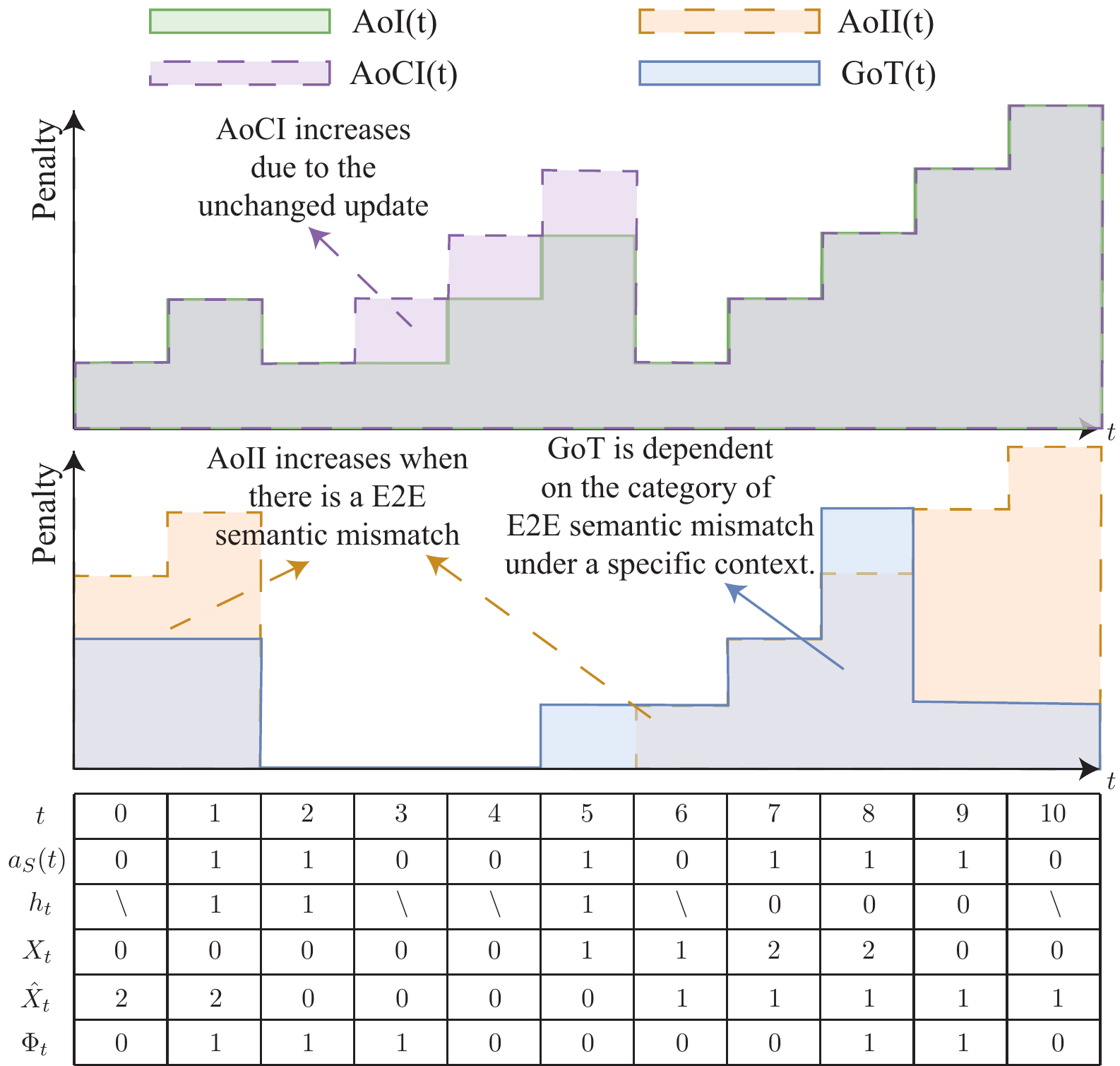}
	\caption{An illustration of AoI, AoCI, AoII, and GoT in a time-slotted status update system. Here, the value of GoT is obtained from the tensor obtained on the right-hand side of Fig. \ref{gotconstructing}.}\label{evolution}
\end{figure}

Fig. \ref{evolution} exhibits an instantaneous progression of AoI, AoCI, AoII, and GoT. From the time slots $t=0,1,9,10$ in Fig. \ref{evolution}, the inherent limitation of AoII emerges conspicuously, as a duration of mismatch may not necessarily culminate in a cost increase. Instead, the category of E2E semantic mismatch will make sense to the true instant cost.
\section{Problem Formulation and Solution}\label{sectionIV}
Conventionally, the formulation of sampling policy has been designed independently from the decision-making process. A typical illustration of this two-stage methodology involves first determining the optimal sampling policy based on AoI or its variants, and subsequently accomplishing effective decision making. This two-stage separate design arises from the inherent limitation of existing metrics that they fail to capture the closed-loop decision \emph{utility}. Nevertheless, the metric GoT empowers us to undertake a co-design of sampling and decision making. We explore the \textit{team decision theory}, wherein two agents, one embodying the sampler and the other the decision maker, collaborate to achieve a shared goal. We aim at determining a joint deterministic policy $\boldsymbol{\pi}_C=(\pi_S,\pi_A)$ that minimizes the long-term average cost of the system. It is considered that the sampling and transmission of an update also consumes energy, incurring a $C_s$ cost. In this case, the instant cost of the system could be clarified by $\mathrm{GoT}^{\pi_A}(t)+C_s\cdot a_S(t)$, and the problem is characterized as: 
\begin{equation}\label{p1}
\begin{array}{*{20}{c}}
{{{\cal P}} 1:}&{\mathop {\min }\limits_{{\boldsymbol{\pi}_C}  \in \Upsilon } \mathop {\lim \sup }\limits_{T \to \infty } \frac{1}{T}{\mathbb{E}^{{\boldsymbol{\pi}}_C} }\left( {\sum\limits_{t = 0}^{T - 1} {\mathrm{GoT}^{\pi_A}(t)+C_s\cdot a_S(t)} } \right)}\\
\end{array},
\end{equation}
where $\boldsymbol{\pi}_C=(\pi_S,\pi_A)$ denotes the joint sampling and decision policy, comprising $\pi_S=(a_S(0),a_S(1),\cdots)$ and $\pi_A=(a_A(0),a_A(1),\cdots)$, which correspond to the sampling action sequence and actuation sequence, respectively. Note that $\mathrm{GoT}^{\pi_A}(t)$ is characterized by (\ref{got}).
\subsection{Dec-POMDP Formulation}
To solve the problem $\mathcal{P}1$, we ought to formulate a
\textit{Decentralized Partially Observable Markov Decision Processes} (DEC-POMDP) problem, which is initially introduced in \cite{bernstein2002complexity} to solve the cooperative sequential decision issues for distributed multi-agents. Within a Dec-POMDP framework, a team of agents cooperates to achieve a shared goal, relying solely on their localized knowledge. A typical Dec-POMDP is denoted by a tuple $\mathscr{M}_{DEC-POMDP}\triangleq\left\langle n, \mathcal{I}, \mathcal{A}, \mathcal{T}, \Omega, \mathcal{O}, \mathcal{R}  \right\rangle$:

\noindent $\bullet$ $n$ denotes the number of agents. We have $n=2$ in the considered model, signifying the presence of two agents: one agent $\mathcal{A}gent_S$ embodies the sampler, while the other represents the decision maker, denoted by $\mathcal{A}gent_A$.

\noindent $\bullet$ $\mathcal{I}$ is the finite set of the global system status, characterized by $(X_t, \hat{X}_t,\Phi_t)\in \mathcal{S}\times\mathcal{S}\times\mathcal{V}$. For the sake of brevity, we henceforth denote $\mathbf{W}_t=(X_t, \hat{X}_t,\Phi_t)$ in the squeal.

\noindent $\bullet$ $\mathcal{T}$ is the transition function defined by
\begin{equation}
\mathcal{T}\left(\mathbf{w},\mathbf{a},\mathbf{w}'\right)\triangleq\Pr(\mathbf{W}_{t+1}=\mathbf{w}'|\mathbf{W}_t=\mathbf{w},\mathbf{a}_t=\mathbf{a}),
\end{equation}
which is defined by the transition probability from global status $\mathbf{W}_t=\mathbf{w}$ to status $\mathbf{W}_{t+1}=\mathbf{w}'$, after the agents in the system taking a joint action $\mathbf{a}_t=\mathbf{a}=(a_S(t),a_A(t))$. {For the sake of concise notation, we let $p(\mathbf{w}'|\mathbf{w},\mathbf{a})$ symbolize $\mathcal{T}\left(\mathbf{w},\mathbf{a},\mathbf{w}'\right)$ in the subsequent discourse.} Then, the transition functions can be calculated in lemma \ref{as11}
\begin{lemma} The transition functions of the Dec-POMDP: \label{as11}
	\begin{equation}
	\begin{aligned}
	&p\left((s_u,x,v_r)\left|(s_i,s_j,v_k),(1,a_m)\right.\right)=\\
	&p_{i,u}^{(k,m)}\cdot p_{k,r} \cdot \left(p_S\cdot\mathbbm{1}_{\left\{x=s_i\right\}}+(1-p_S)\cdot\mathbbm{1}_{\left\{x=s_j\right\}}\right)\\
	\end{aligned},
	\end{equation}
	\begin{equation}\label{as00}
	\begin{aligned}
	&p\left((s_u,x,v_r)\left|(s_i,s_j,v_k),(0,a_m)\right.\right)=\\
	&p_{i,u}^{(k,m)}\cdot p_{k,r} \cdot \mathbbm{1}_{\left\{x=s_j\right\}}
	\end{aligned},
	\end{equation}
	for any $x\in\mathcal{S}$ and indexes $i$, $j$, $u\in\left\{1,2,\cdots,|\mathcal{S}|\right\}$, $k$, $r\in\left\{1,2,\cdots,|\mathcal{V}|\right\}$, and $m\in\left\{1,2,\cdots,|\mathcal{A}_A|\right\}$.
\end{lemma}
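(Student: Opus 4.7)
The plan is to establish both identities by decomposing the one-step joint transition probability into three factors — one for the source, one for the context, and one for the receiver's estimate — and then substituting the marginal transitions already derived in (\ref{Source}), (\ref{context}), (\ref{as0}), and (\ref{as1}). First I would apply the chain rule of conditional probability to write
\begin{equation*}
p\!\left((s_u,x,v_r)\,|\,(s_i,s_j,v_k),(a_s,a_m)\right) = \Pr\!\left(X_{t+1}=s_u,\,\Phi_{t+1}=v_r,\,\hat{X}_{t+1}=x \,\middle|\, \mathbf{W}_t,\mathbf{a}_t\right),
\end{equation*}
with $\mathbf{W}_t=(s_i,s_j,v_k)$ and $\mathbf{a}_t=(a_s,a_m)$.

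The key step is to show that, conditioned on $(\mathbf{W}_t,\mathbf{a}_t)$, the three random variables $X_{t+1}$, $\Phi_{t+1}$, and $\hat{X}_{t+1}$ are mutually independent. This follows directly from the model: $X_{t+1}$ is generated by the Markov kernel in (\ref{Source}); $\Phi_{t+1}$ is generated by the kernel (\ref{context}) and is by assumption independent of the semantic dynamics; and $\hat{X}_{t+1}$ is a deterministic function of $(X_t,\hat{X}_t,a_S(t))$ together with the i.i.d.\ channel realization $h_t$, which is independent of the source and context randomization. Hence the joint probability factors into the product of the three marginal transitions. I would then substitute the known values: the first factor equals $p_{i,u}^{(k,m)}$ by (\ref{Source}), the second equals $p_{k,r}$ by (\ref{context}), and the third equals $\mathbbm{1}_{\{x=s_j\}}$ when $a_s=0$ by (\ref{as0}), or $p_S\,\mathbbm{1}_{\{x=s_i\}}+(1-p_S)\,\mathbbm{1}_{\{x=s_j\}}$ when $a_s=1$ by (\ref{as1}). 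Multiplying these three factors yields exactly the two expressions claimed.

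The step requiring the most care — though it is not a deep technical obstacle — is the conditional independence argument. It relies on the modeling assumptions that the channel noise $h_t$ is drawn independently of the source, context, and sampling action, and that the source and context transitions use independent randomization sources. Once these relations are made explicit, the lemma follows by direct substitution and a single multiplication, with no further case analysis beyond splitting on $a_s\in\{0,1\}$.
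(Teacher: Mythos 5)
Your proposal is correct and follows essentially the same route as the paper's proof: conditional independence of $X_{t+1}$, $\Phi_{t+1}$, and $\hat{X}_{t+1}$ given $(\mathbf{W}_t,\mathbf{a}_t)$, followed by factorization and substitution of the marginal kernels in (\ref{Source}), (\ref{context}), (\ref{as0}), and (\ref{as1}). Your write-up simply makes explicit (the independence of the channel realization $h_t$ and the case split on $a_S$) what the paper leaves implicit.
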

\begin{proof}
	By taking into account the \textit{conditional independence} among $X_{t+1}$, $\Phi_{t+1}$, and $X_{t+1}$, given $(X_{t},\Phi_{t},X_{t})$ and $\mathbf{a}(t)$, the transition functions can be derived by incorporating the dynamics in equations (\ref{Source}), (\ref{context}), (\ref{as0}), and (\ref{as1}). 
\end{proof}

\noindent $\bullet$ $\mathcal{A}=\mathcal{A}_S\times\mathcal{A}_A$, with $\mathcal{A}_S\triangleq\left\{0,1\right\}$ representing the action set of the sampler, and $\mathcal{A}_A\triangleq\left\{a_0,\cdots,a_{M-1}\right\}$ representing the action set of the decision maker.

\noindent $\bullet$ $\Omega=\Omega_S\times \Omega_A$, with $\Omega_S$ signifies the sampler's observation domain. In this instance, the sampler $\mathcal{A}gent_S$ is entirely observable, with $\Omega_S$ encompassing the comprehensive system state ${o}_S^{(t)}=\mathbf{W}_t$. $\Omega_A$ signifies the actuator's observation domain. In this case, the decision-maker $\mathcal{A}gent_A$ is partially observable, with $\Omega_A$ comprising $o_A^{(t)}=\hat{X}(t)$. The joint observation at time instant $t$ is denoted by $\mathbf{o}_t=(o_S^{(t)},o_A^{(t)})$.

\noindent $\bullet$ $\mathcal{O}=\mathcal{O}_S\times\mathcal{O}_A$ represents the observation function, where $\mathcal{O}_S$ and $\mathcal{O}_A$ denotes the observation function of the sampler $\mathcal{A}gent_S$ and the actuator $\mathcal{A}gent_A$, respectively, defined as:
\begin{equation}
\begin{aligned}
\mathcal{O}(\mathbf{w}, \mathbf{o})\triangleq\Pr(\mathbf{o}_t=\mathbf{o}|\mathbf{W}_t=\mathbf{w}),\\
\mathcal{O}_S(\mathbf{w}, o_S)\triangleq\Pr(o_S^{(t)}=o_S|\mathbf{W}_t=\mathbf{w}),\\
\mathcal{O}_A(\mathbf{w}, o_A)\triangleq\Pr(o_A^{(t)}=o_A|\mathbf{W}_t=\mathbf{w}).
\end{aligned}
\end{equation}	
The observation function of an agent $\mathcal{A}gent_i$ signifies the conditional probability of agent $\mathcal{A}gent_i$ perceiving $o_i$, contingent upon the prevailing global system state as $\mathbf{W}_t=\mathbf{w}$. For the sake of brevity, we henceforth let $p_A(o_A|\mathbf{w})$ represent $\mathcal{O}_A(\mathbf{w}, o_A)$ and $p_S(o_S|\mathbf{w})$ represent $\mathcal{O}_S(\mathbf{w}, o_A)$ in the subsequent discourse. In our considered model, the observation functions are deterministic, characterized by lemma \ref{obf}.
\begin{lemma} The observation functions of the Dec-POMDP: \label{obf}
	\begin{equation} 
	\begin{aligned}
	p_S\left((s_u,s_r,v_q)|(s_i,s_j,v_k)\right)&=\mathbbm{1}_{\left\{(s_u,s_r,v_q)=(s_i,s_j,v_k)\right\}},\\
	p_A\left(s_z|(s_i,s_j,v_k)\right)&=\mathbbm{1}_{\left\{s_z=s_j\right\}}.
	\end{aligned}
	\end{equation}
	for indexes $z$, $i$, $j$, $u$, $r\in\left\{1,2,\cdots\,|\mathcal{S}|\right\}$, and $k$, $q\in\left\{1,2,\cdots\,|\mathcal{V}|\right\}$.
\end{lemma}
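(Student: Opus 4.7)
The plan is to verify both observation functions by directly unpacking the definitions already given in Section II, so there is no real combinatorial work to do --- the content of the lemma is essentially bookkeeping that the Dec-POMDP observation structure we declared matches what the physical model dictates.

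First I would recall, from the bullet point defining $\Omega$, that the sampler $\mathcal{A}gent_S$ is fully observable with $o_S^{(t)} = \mathbf{W}_t$, while the decision-maker $\mathcal{A}gent_A$ is partially observable with $o_A^{(t)} = \hat{X}_t$. Both maps from $\mathbf{W}_t$ to the observation are deterministic, so the associated conditional distributions must be point masses expressible through indicator functions.

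For the sampler part, I would simply expand
\[
p_S\!\left((s_u,s_r,v_q)\,\big|\,(s_i,s_j,v_k)\right) = \Pr\!\left(o_S^{(t)} = (s_u,s_r,v_q)\,\big|\,\mathbf{W}_t = (s_i,s_j,v_k)\right),
\]
and substitute $o_S^{(t)} = \mathbf{W}_t$; the right-hand side then equals $1$ precisely when the triples coincide coordinate-wise, which is exactly the indicator claimed. For the decision-maker part, I would apply the same substitution $o_A^{(t)} = \hat{X}_t$, so that
\[
p_A\!\left(s_z\,\big|\,(s_i,s_j,v_k)\right) = \Pr\!\left(\hat{X}_t = s_z\,\big|\,(X_t,\hat{X}_t,\Phi_t) = (s_i,s_j,v_k)\right),
\]
which collapses to $\mathbbm{1}_{\{s_z = s_j\}}$ since $\hat{X}_t$ is the second coordinate of $\mathbf{W}_t$.

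Since both identities follow immediately from the definitions, there is no genuine obstacle; the only thing to be careful about is consistency of notation --- making sure the index conventions $(i,j,k)$ for the conditioning state and $(u,r,q)$ (respectively $z$) for the observed coordinates line up with the ranges stipulated in the lemma, and that the indicator captures equality on all three coordinates for $p_S$ but only on the $\hat{X}$-coordinate for $p_A$. I would close by noting that this lemma thus formalizes the ``fully observable sampler, partially observable actuator'' assumption used later in the RVI-Brute-Force-Search algorithm.
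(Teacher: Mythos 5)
Your proposal is correct and matches the paper's (implicit) reasoning: the paper states this lemma without a separate proof, treating it exactly as you do --- an immediate consequence of the deterministic observation model $o_S^{(t)}=\mathbf{W}_t$ and $o_A^{(t)}=\hat{X}_t$, so the conditional distributions are point masses written as indicators. Nothing further is needed.
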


\noindent $\bullet$ $\mathcal{R}$ is the reward function, characterized by a mapping $\mathcal{I}\times\mathcal{A}\rightarrow\mathbb{R}$. In the long-term average reward maximizing setup, resolving a Dec-POMDP is equivalent to addressing the following problem:
\begin{equation}
\begin{array}{*{20}{c}}
{\mathop {\min }\limits_{{\boldsymbol{\pi}_C}  \in \Upsilon } \mathop {\lim \sup }\limits_{T \to \infty } \frac{1}{T}{\mathbb{E}^{{\boldsymbol{\pi}}_C} }\left( -{\sum\limits_{t = 0}^{T - 1} r(t) } \right)}
\end{array}.
\end{equation}
Subsequently, to establish the congruence with the problem $\mathcal{P}1$, the reward function is correspondingly defined as:
\begin{equation}
r(t)=\mathcal{R}^{\pi_A}(\mathbf{w},a_S)=-\mathrm{GoT}^{\pi_A}(t)-C_s\cdot a_S(t).
\end{equation}

\subsection{Solutions to the Infinite-Horizon Dec-POMDP}

In general, solving a Dec-POMDP is known to be NEXP-complete for the finite-horizon setup \cite{bernstein2002complexity}. For an infinite-horizon Dec-POMDP problem, finding an optimal policy for a Dec-POMDP problem is known to be undecidable. Nevertheless, within our considered model, both the sampling and decision-making processes are considered to be deterministic, given as $a_S(t)=\pi_S(\mathbf{w})$ and $a_A(t)=\pi_A(o_A)$. In this case, it is feasible to determine a joint optimal policy via Brute Force Search across the decision-making policy space.

The idea is based on the finding that, given a deterministic decision policy $\pi_A$, the sampling problem can be formulated as a standard fully observed MDP problem denoted by $\mathscr{M}^{\pi_A}_{\mathrm{MDP}}\triangleq\langle\mathcal{I},\mathcal{T}^{\pi_A},\mathcal{A}_S,\mathcal{R}\rangle$. 

\begin{proposition}\label{proposition1}
	Given a deterministic decision-making policy $\pi_A$, the optimal sampling problem could be formulated by a typical fully observed MDP problem $\mathscr{M}^{\pi_A}_{\mathrm{MDP}}\triangleq\langle\mathcal{I},\mathcal{A}_S,\mathcal{T}_{\mathrm{MDP}}^{\pi_A},\mathcal{R}\rangle$, where the elements are given as follows:
	
	\noindent $\bullet$ $\mathcal{I}$: the same as the pre-defined Dec-POMDP tuple.
	 
	\noindent $\bullet$ $\mathcal{A}_S=\left\{0,1\right\}$: the sampling and transmission action set.
	
	\noindent $\bullet$ $\mathcal{T}^{\pi_A}$: the transition function given a deterministic decision policy $\pi_A$, which is 
	\begin{equation}
	\begin{aligned}
	&\mathcal{T}^{\pi_A}(\mathbf{w},a_S,\mathbf{w}')=p^{\pi_A}\left(\mathbf{w}'|\mathbf{w},a_S\right)\\
	&=\sum_{o_A\in\mathcal{O}_A}p\left(\mathbf{w}'|\mathbf{w},(a_S,\pi_A(o_A))\right)p_A(o_A|\mathbf{w})
	\end{aligned},
	\end{equation}
	where $p\left(\mathbf{w}'|\mathbf{w},(a_S,\pi_A(o_A))\right)$ could be obtained by Lemma \ref{as11} and $p(o_A|\mathbf{w})$ could be obtained by Lemma \ref{obf}.
	
	\noindent $\bullet$ $\mathcal{R}$: the same as the pre-defined Dec-POMDP tuple.
	
\end{proposition}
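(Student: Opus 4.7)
The plan is to fix the decision-making policy $\pi_A$ and argue that the resulting single-agent control problem faced by the sampler is a fully observed MDP over the global state space $\mathcal{I}$, with transition kernel and reward precisely as claimed. The argument proceeds in three steps: (i) verify full observability for the sampler, (ii) marginalize the Dec-POMDP transition kernel against $\pi_A$ and the decision-maker's observation channel, and (iii) check that the instantaneous reward depends only on $(\mathbf{w},a_S)$ once $\pi_A$ is fixed.

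First, I would invoke Lemma \ref{obf} to note that $p_S(\cdot|\mathbf{w})$ is a Kronecker delta at $\mathbf{w}$; therefore the sampler's information state at time $t$ is precisely $\mathbf{W}_t$, so a stationary deterministic sampling policy of the form $a_S(t)=\pi_S(\mathbf{w})$ is without loss of optimality and the state space of the reduced problem is indeed $\mathcal{I}$. Next, I would derive the transition kernel by conditioning on the decision-maker's observation. Since $\mathcal{A}gent_A$ picks $a_A(t)=\pi_A(o_A^{(t)})$ with $o_A^{(t)}\sim p_A(\cdot|\mathbf{W}_t)$, and since $o_A^{(t)}$ is conditionally independent of the next state given $(\mathbf{w}, \mathbf{a})$, the total-probability law gives
\begin{equation*}
p^{\pi_A}(\mathbf{w}'|\mathbf{w},a_S)=\sum_{o_A\in\Omega_A}p\bigl(\mathbf{w}'\bigm|\mathbf{w},(a_S,\pi_A(o_A))\bigr)\,p_A(o_A|\mathbf{w}),
\end{equation*}
which matches $\mathcal{T}^{\pi_A}(\mathbf{w},a_S,\mathbf{w}')$ in the statement; the two factors on the right-hand side are furnished by Lemma \ref{as11} and Lemma \ref{obf} respectively. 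Finally, I would observe that $\mathrm{GoT}^{\pi_A}(t)$ as defined in \eqref{got} is a deterministic function of $(X_t,\hat{X}_t,\Phi_t)=\mathbf{w}$ once $\pi_A$ is fixed, and $C_s\cdot a_S(t)$ depends only on $a_S$; hence $\mathcal{R}^{\pi_A}(\mathbf{w},a_S)=-\mathrm{GoT}^{\pi_A}(t)-C_s\cdot a_S$ is well defined as a function on $\mathcal{I}\times\mathcal{A}_S$, so the tuple $\langle\mathcal{I},\mathcal{A}_S,\mathcal{T}^{\pi_A},\mathcal{R}\rangle$ constitutes a bona fide fully observed MDP whose optimal sampling policy coincides with the sampler component of the joint optimizer of $\mathcal{P}1$ restricted to decision policy $\pi_A$.

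The only nontrivial step is step (ii): because of decentralization the sampler cannot condition on the realized action $a_A(t)$, only on $\mathbf{w}$; the marginalization above is what legitimately collapses the two-agent kernel into a single-agent kernel after $\pi_A$ is fixed. In our model this collapse is especially clean because $p_A(o_A|\mathbf{w})$ is a point mass on $\hat{X}_t$, so the sum has a single nonzero term and the formula reduces to $p(\mathbf{w}'|\mathbf{w},(a_S,\pi_A(\hat{X}_t)))$; nevertheless I would keep the general summation in the statement to emphasize that the construction is a standard Dec-POMDP-to-MDP reduction under a fixed partner policy, and the proof is essentially a direct verification rather than a deep result.
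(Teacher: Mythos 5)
Your proposal is correct and follows exactly the route the paper intends: the paper states Proposition \ref{proposition1} without a separate proof, relying on the direct verification you give, namely full observability of the sampler via Lemma \ref{obf}, marginalization of the joint kernel over $o_A$ with the fixed $\pi_A$ (Lemmas \ref{as11} and \ref{obf} supplying the two factors), and the observation that the reward depends only on $(\mathbf{w},a_S)$ once $\pi_A$ is fixed. Your remark that the sum collapses to the single term $p\left(\mathbf{w}'|\mathbf{w},(a_S,\pi_A(\hat{X}_t))\right)$ because $p_A(\cdot|\mathbf{w})$ is a point mass is consistent with the paper's deterministic observation model and adds nothing beyond what the stated lemmas already give.
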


We now proceed to solve the MDP problem $\mathscr{M}^{\pi_A}_{\mathrm{MDP}}$. To deduce the optimal sampling policy under decision policy $\pi_A$, it is imperative to resolve the Bellman equations \cite{bertsekas2012dynamic}:
\begin{equation}
\begin{aligned}
&\theta^*_{\pi_A}+V_{\pi_A}(\mathbf{w})=\\
&\mathop{\max}\limits_{a_S\in\mathcal{A}_A}\left\{\mathcal{R}^{\pi_A}(\mathbf{w},a_S)+\sum_{\mathbf{w}'\in\mathcal{I}}p(\mathbf{w}'|\mathbf{w},a_S)V_{\pi_A}(\mathbf{w}')\right\},\\
\end{aligned}
\end{equation}
where $V^{\pi_A}(\mathbf{w})$ is the value function and $\theta^*_{\pi_A}$ is the optimal long-term average reward given the decision policy $\pi_A$. We apply the relative value iteration (RVI) algorithm to solve this problem. The details are shown in Algorithm \ref{Algorithm 1}:
\begin{algorithm}
	\caption{The RVI Algorithm to Solve the MDP Given the decision policy $\pi_A$}
	\label{Algorithm 1}
	\LinesNumbered
	\KwIn{The MDP tuple $\mathscr{M}^{\pi_A}_{\mathrm{MDP}}$, $\epsilon$, $\pi_A$;}
	Initialization: $\forall\mathbf{w}\in\mathcal{I}$, $\tilde{V}^0_{\pi_A}(\mathbf{w})=0$, $\tilde{V}^{-1}_{\pi_A}(\mathbf{w})=\infty$, $k=0$ \;
	Choose $\mathbf{w}^{ref}$ arbitrarily\;
		\While {$||\tilde{V}_{\pi_A}^k(\mathbf{w})-\tilde{V}_{\pi_A}^{k-1}(\mathbf{w})||\ge \epsilon$}
		{
			$k=k+1$\;
		\For{$\mathbf{w}\in\mathcal{I}-\mathbf{w}^{ref}$}
		{
			$\begin{scriptsize}
			\begin{aligned}
			&\tilde{V}_{\pi_A}^k(\mathbf{w})=-g_k+\\
			&\mathop{\max}\limits_{a_S}\left\{\mathcal{R}(\mathbf{w},a_S)+\sum_{\mathbf{w}'\in\mathcal{I}-\mathbf{w}^{ref}}p(\mathbf{w}'|\mathbf{w},a_S)\tilde{V}^{k-1}_{\pi_A}(\mathbf{w}')\right\};
			\end{aligned}
			\end{scriptsize}$
		}			
		}
	$\begin{aligned}
		&\theta^*(\pi_A,\pi_S^*)=-\tilde{V}^{k}_{\pi_A}(\mathbf{w})\\
		&\mathop{\max}\limits_{a_S\in\mathcal{A}_S}\left\{\mathcal{R}(\mathbf{w},a_S)+\sum_{\mathbf{w}'\in\mathcal{I}}p(\mathbf{w}'|\mathbf{w},a_S)\tilde{V}^{k}_{\pi_A}(\mathbf{w}')\right\}
	\end{aligned}$\;
	\For{$\mathbf{w}\in\mathcal{I}$}{
	$\pi_S^{*}(\pi_A,\mathbf{w})=\mathop{\arg\max}\limits_{a_S}\left\{\mathcal{R}(\mathbf{w},a_S)+\sum_{\mathbf{w}'\in\mathcal{I}}p(\mathbf{w}'|\mathbf{w},a_S)\tilde{V}^{k}_{\pi_A}(\mathbf{w}')\right\}$;}
	\KwOut{$\pi_S^{*}(\pi_A)$, $\theta^*(\pi_A,\pi_S^*)$}
\end{algorithm}

With Proposition \ref{proposition1} and Algorithm \ref{Algorithm 1} in hand, we could then perform a Brute Force Search across the decision policy space $\Upsilon_A$, thereby acquiring the joint sampling-decision-making policy. The algorithm, called RVI-Brute-Force-Search Algorithm, is elaborated in Algorithm \ref{Algorithm 2}.
\begin{algorithm}
	\caption{The RVI-Brute-Force-Search Algorithm}
	\label{Algorithm 2}
	\LinesNumbered
	\KwIn{The Dec-POMDP tuple $\mathscr{M}_{DEC-POMDP}$;}
	\For{$\pi_A\in\Upsilon$}{
		Formulate the MDP problem $\mathscr{M}^{\pi_A}_{\mathrm{MDP}}\triangleq\langle\mathcal{I},\mathcal{A}_S,\mathcal{T}_{\mathrm{MDP}}^{\pi_A},\mathcal{R}\rangle$ as given in Proposition \ref{proposition1}\;
		Run Algorithm 1 to obtain $\pi_S^*(\pi_A)$ and $\theta^*({\pi_A},\pi_S^*)$\;}
	Calculate the optimal joint policy:
	$\begin{cases}
	&\pi_A^* =\mathop{\arg\min}\nolimits_{\pi_A}\theta_{\pi_A}^* \\
	&\pi_S^*=\pi_S(\pi^*_A)
	\end{cases}$\;
	\KwOut{$\pi_S^{*}$, $\pi_A^*$}
\end{algorithm}
\begin{theorem}
	The RVI-Brute-Force-Search Algorithm (Algorithm \ref{Algorithm 2}) could achieve the optimal joint deterministic policies $(\pi_S^*,\pi_A^*)$, given that the transition function $\mathcal{T}^{\pi_A}$ follows a unichan. 
	
\end{theorem}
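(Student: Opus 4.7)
The plan is to decompose the joint minimization in $\mathcal{P}1$ into an outer enumeration over the finite set $\Upsilon_A$ of deterministic decision-maker policies $\pi_A:\mathcal{S}\to\mathcal{A}_A$ and, for each fixed $\pi_A$, an inner search for the optimal sampling policy. Since $|\Upsilon_A|=|\mathcal{A}_A|^{|\mathcal{S}|}<\infty$, one can write
\begin{equation*}
\min_{\boldsymbol{\pi}_C\in\Upsilon}\theta(\pi_S,\pi_A)=\min_{\pi_A\in\Upsilon_A}\Bigl(\min_{\pi_S}\theta(\pi_S,\pi_A)\Bigr),
\end{equation*}
so it suffices to solve the inner problem exactly for every $\pi_A$ and then take the minimum over a finite list.

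Next, I would invoke Proposition \ref{proposition1}, which recasts the inner problem, with $\pi_A$ held fixed, as a standard fully-observed average-reward MDP $\mathscr{M}^{\pi_A}_{\mathrm{MDP}}$ with transition kernel $\mathcal{T}^{\pi_A}$ and reward $\mathcal{R}^{\pi_A}$. Under the hypothesis that $\mathcal{T}^{\pi_A}$ is unichain for every $\pi_A\in\Upsilon_A$, classical results for finite-state average-cost MDPs (see \cite{bertsekas2012dynamic}) give three facts: (i) the optimal gain $\theta^*_{\pi_A}$ is independent of the initial state; (ii) the Bellman optimality equation admits a solution $(\theta^*_{\pi_A},V_{\pi_A})$ whose greedy policy $\pi_S^*(\pi_A)$ is average-cost optimal; and (iii) the RVI recursion in Algorithm \ref{Algorithm 1}, anchored at $\mathbf{w}^{ref}$, is a span contraction whose iterates $\tilde V_{\pi_A}^k$ converge (up to an additive constant) to $V_{\pi_A}$ and whose averaged-cost estimate $g_k$ converges to $\theta^*_{\pi_A}$. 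Consequently Algorithm \ref{Algorithm 1} returns $\pi_S^*(\pi_A)$ and $\theta^*(\pi_A,\pi_S^*)$ for every input $\pi_A$.

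Finally, Algorithm \ref{Algorithm 2} enumerates every $\pi_A\in\Upsilon_A$, calls Algorithm \ref{Algorithm 1} to obtain $\theta^*(\pi_A,\pi_S^*(\pi_A))$, and outputs the pair $(\pi_S^*(\pi_A^*),\pi_A^*)$ minimizing this quantity; by the decomposition above, this pair attains the optimum of $\mathcal{P}1$ over all deterministic joint policies, which is precisely the claim. The main technical obstacle is justifying step (iii): strictly speaking RVI only delivers convergence of value-function \emph{differences} and of $g_k$, not pointwise convergence of $\tilde V_{\pi_A}^k$, so one must argue that once $\|\tilde V_{\pi_A}^k-\tilde V_{\pi_A}^{k-1}\|<\epsilon$ the greedy action sets agree with those of $V_{\pi_A}$; in the finite-state setting this happens in finitely many iterations for sufficiently small $\epsilon$ because the set of candidate greedy actions is discrete. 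A secondary and much milder issue is verifying that $\mathcal{T}^{\pi_A}$ is indeed unichain, which reduces to checking accessibility of the states in $\mathcal{I}=\mathcal{S}\times\mathcal{S}\times\mathcal{V}$ under the combined source, context, channel and $\pi_A$-driven dynamics given in Lemma \ref{as11}; this is the assumption explicitly carried in the statement of the theorem.
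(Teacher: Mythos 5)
Your proposal is correct and takes essentially the same route as the paper: an outer enumeration over the finite set of deterministic decision policies combined with exact solution of the inner average-cost MDP under the unichain hypothesis (the paper invokes \cite[Theorem 8.4.5]{puterman2014markov} where you cite the analogous classical result), and then chaining the two minimizations exactly as in the paper's two displayed inequalities. Your closing remark that RVI only guarantees an $\epsilon$-accurate greedy policy, so one must argue the greedy action sets stabilize for small enough $\epsilon$, is a legitimate refinement that the paper's one-line proof silently glosses over, but it does not alter the structure of the argument.
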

\begin{proof}\renewcommand{\qedsymbol}{}
	If the the transition function $\mathcal{T}^{\pi_A}$ follows a unichan, we obtain from \cite[Theorem 8.4.5]{puterman2014markov} that for any $\pi_A$, we could obtain the optimal deterministic policy $\pi_S^*$ such that
	$\theta^*({\pi_A},\pi_S^*)\le\theta^*({\pi_A},\pi_S)$. Also, Algorithm 2 assures that for any $\pi_A$, $\theta^*({\pi^*_A},\pi_S^*)\le\theta^*({\pi_A},\pi_S^*)$. This leads to the conclusion that for any $\boldsymbol{\pi}_C=(\pi_S,\pi_A)\in \Upsilon$, we have that
	\begin{equation}
	\theta^*({\pi^*_A},\pi_S^*)\le\theta^*({\pi_A},\pi_S).
	\end{equation}
\end{proof}

\section{Simulation Results}\label{sectionV}
For the simulation setup, we set $\mathcal{A}_A=\left\{0,\cdots,10\right\}$, $\mathcal{S}=\left\{s_0,s_1,s_2\right\}$, $\mathcal{V}=\left\{v_0,v_1,v_2\right\}$ and the corresponding cost is:
\begin{equation}
\mathbf{C}_1(X_t,\Phi_t)=\begin{pmatrix}
0 &20 &50\\
0 &10 &20
\end{pmatrix},
\end{equation}
We assume $C_2(\pi_A(\hat{X}_t))$ and $C_3(\pi_A(\hat{X}_t))$ are both linear to the decision making with $C_2(\pi_A(\hat{X}_t)=C_g\cdot\pi_A(\hat{X}_t)$ and $C_3(\pi_A(\hat{X}_t))=C_I\cdot\pi_A(\hat{X}_t)$, where $C_g=8$ and $C_I=1$. 

\subsection{Comparing Benchmarks: Separate Design}\label{sectionVA}
For the decision making, we consider that the decision policy $\pi_A$ is predetermined by a greedy methodology:
\begin{equation}\label{greedy}
\begin{aligned}
\pi_A(\hat{X}_t)&=\mathop{\arg\min}\limits_{a_A\in\mathcal{S}_A}\mathop{\mathbb{E}}\limits_{\Phi_{t}}\left\{\left[C_1(\hat{X}_t,\Phi_t) - C_2(\pi_A(\hat{X}_t))\right]^+\right.
\\&\left.+ C_3(\pi_A(\hat{X}_t))\right\}.
\end{aligned}
\end{equation}
This greedy-based approach entails selecting the decision that minimizes the cost in the current step given that the estimate $\hat{X}_t$ is perfect. By calculating (\ref{greedy}), we obtain a greedy-based decision-making policy $\pi_A(\hat{X}_t) = [0,3,7]$. Under this decision-making policy, the following sampling benchmarks are considered for the sampling design:

\noindent $\bullet$ {\textbf{Uniform.}} Sampling is triggered periodically, \emph{i.e.}, $a_S(t)=\mathbbm{1}_{\left\{t=K*\Delta\right\}}$, where $K=0,1,2,\cdots$ and $\Delta\in\mathbb{N}^+$. For each $\Delta$, the sampling rate is calculated as $1/\Delta$ and the long-term average cost is obtained through Markov chain simulations.

\noindent $\bullet$ {\textbf{Age-aware.}} Sampling is executed when the AoI attains a predetermined threshold, \emph{i.e.},  $a_S(t)=\mathbbm{1}_{\left\{\mathrm{AoI(t)}>\delta\right\}}$, where the AoI-optimal threshold $\delta$ can be ascertained using the Bisection method delineated in Algorithm 1 of \cite{sun2019sampling}. In Fig. \ref{RatevsCostComparision}, we dynamically shift the threshold $\delta$ to explore the balance between sampling rate and \emph{utility}. 

\noindent $\bullet$ {\textbf{Change-aware}} Sampling is triggered whenever the source status changes, \emph{i.e.}, $a_S(t)=\mathbbm{1}_{\left\{X_t\ne X_{t-1}\right\}}$. The sampling frequency of this policy is heavily influenced by the system's dynamics: if the sources are transferred frequently, the sampling rate will be high, whereas if there are fewer transfers, the sampling rate will be low.

\begin{figure}
	\centering
	\includegraphics[angle=0,width=0.48\textwidth]{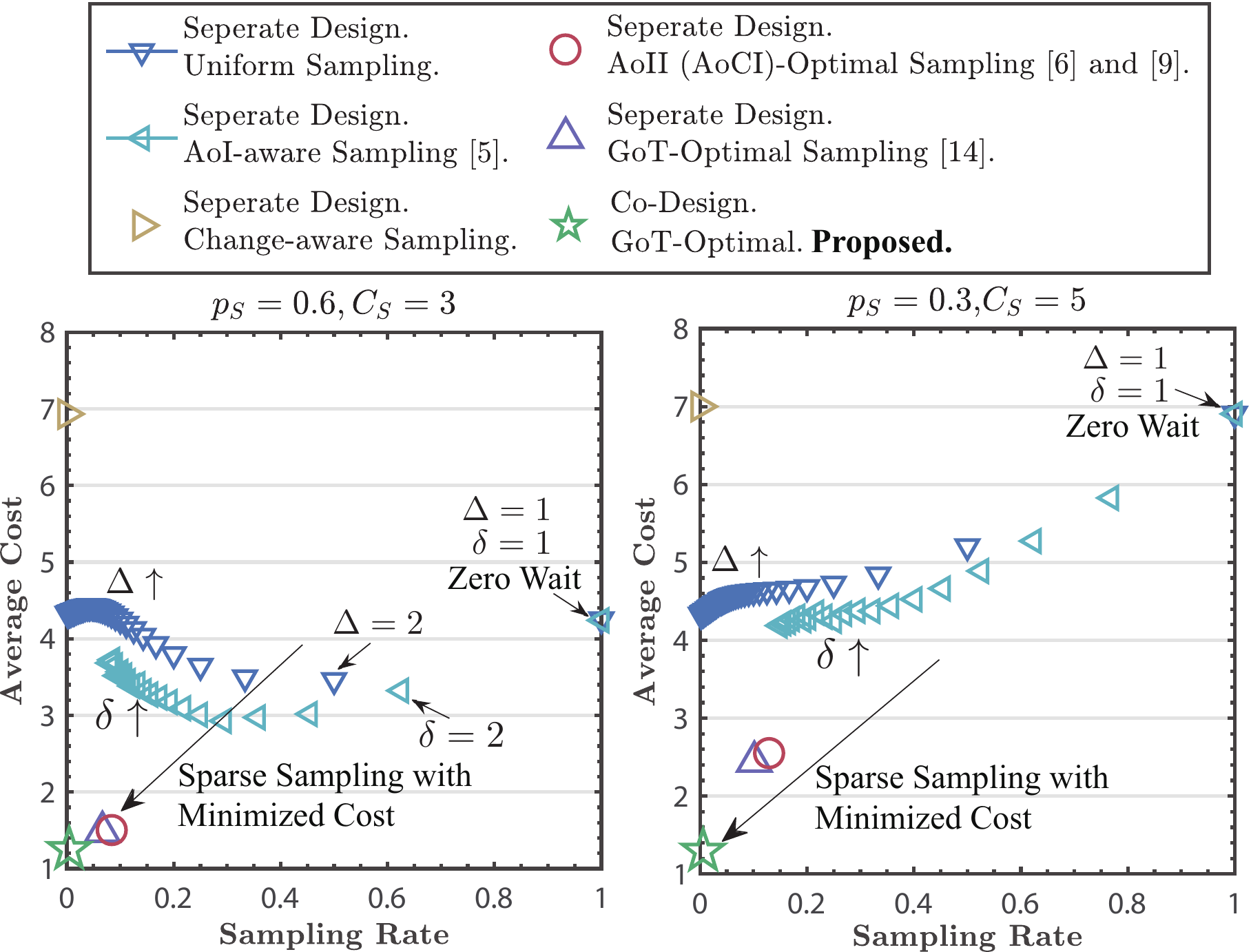}
	\caption{Average Cost vs. Sampling Rate under different policies and parameters setup. The series of Uniform and AoI-aware policies are obtained through shifting the intervals $\Delta$ and $\delta$. }\label{RatevsCostComparision}
\end{figure}

\noindent $\bullet$ {\textbf{Optimal AoII (also Optimal AoCI).}} From \cite{AoII}, it has been proven that the AoII-optimal sampling policy turns out to be $a_S(t)=\mathbbm{1}_{\left\{X_t\ne \hat{X}_t \right\}}$. From \cite{AoCI}, the AoCI-optimal sampling policy is $a_S(t)=\mathbbm{1}_{\left\{X_t\ne {X}_{t-\mathrm{AoI}(t)} \right\}}$. Note that $\hat{X}_t={X}_{t-\mathrm{AoI}(t)}$, these two sampling policies are equivalent. The sampling rate and average cost are obtained given this sampling policy and the greedy-based decision-making policy.

\subsection{Co-Design Through GoT}\label{sectionVB}

We notice that sampling and decision making are closely intertwined, highlighting the potential for further exploration of joint design. In this paper, we have introduced the RVI-Brute-Force-Search Algorithm (Algorithm \ref{Algorithm 2}) to distributively obtain the optimal joint policy. As shown in Fig. \ref{RatevsCostComparision}, the \emph{sampler-decision maker} co-design achieves the optimal long-term average \emph{utility} through only sparse sampling. Only information that carries crucial semantics for the decision making is sampled and transmitted, while others are filtered out. By incorporating a best-matching decision policy, the proposed goal-oriented, semantic-aware, and sparse sampling achieves superior performance compared to existing methods. In this regard, a goal-oriented, semantics-context-aware, sparse sampling is to achieve the maximized \emph{utility} through effective decision making.

\section{Conclusion}\label{sectionVI}
In this paper, we have investigated the GoT metric to directly describe the goal-oriented system decision-making\emph{utility}. Employing the GoT, we have formulated an infinite horizon Dec-POMDP problem to accomplish the integrated design of sampling and decision. To solve this problem, we have developed the RVI-Brute-Force-Search Algorithm to attain the optimal solution. Comparative analyses have substantiated that the proposed GoT-optimal \emph{sampler-decision maker} co-design can achieve sparse sampling and meanwhile maximize the \emph{utility}, signifying the realization for a sparse sampler and goal-oriented decision maker co-design.

\bibliographystyle{IEEEtran}
\bibliography{reference}

\end{document}